\journal{Automatica}
\def \R{{\mathbb R}}
\def \N{{\mathbb N}}
\def \KL{\mathcal{KL}}
\def \ki{\mathcal{K}_{\infty}}
\def \U{\mathcal{U}}
\def \S{\mathcal{S}}
\def \T{\mathcal{T}}
\def \A{\mathcal{A}}
\def\K{\mathcal{K}}
\def\Ki{\K_{\infty}}
 \def\mer{\hfill $\circ$}
 \def\AL{\A\mathcal{L}}
\DeclareMathOperator*{\esssup}{ess.sup}
\newtheorem{teo}{Theorem}[section]
\newtheorem{lema}[teo]{Lemma}
\newtheorem{claim}{Claim}
\newdefinition{defin}[teo]{Definition}
\newdefinition{as}{Assumption}
\newdefinition{rem}[teo]{Remark}
\newproof{proof}{\textbf{Proof}}
\renewcommand{\qed}{$\hfill\blacksquare$}
\renewcommand{\subsection}[1]{\stepcounter{subsection}\vspace{\baselineskip}\noindent\textbf{\thesubsection. #1}}
\begin{document}

\begin{frontmatter}

\title{Nonrobustness of asymptotic stability of impulsive systems with inputs}


\author[HHaddress]{Hernan Haimovich\corref{mycorrespondingauthor}}
\cortext[mycorrespondingauthor]{Corresponding author}
\ead{haimovich@cifasis-conicet.gov.ar}

\author[ITBA]{Jos\'e L. Mancilla-Aguilar}
\ead{jmancill@itba.edu.ar}

\address[HHaddress]{International Center for Information and Systems Science (CIFASIS), CONICET-UNR, Ocampo y Esmeralda, 2000 Rosario, Argentina.}
\address[ITBA]{Instituto Tecnol\'ogico de Buenos Aires, Av. E. Madero 399, Buenos Aires, Argentina.}

\begin{abstract}
  Suitable continuity and boundedness assumptions on the function $f$ defining the dynamics of a time-varying nonimpulsive system with inputs are known to make the system inherit stability properties from the zero-input system. Whether this type of robustness holds or not for impulsive systems was still an open question. By means of suitable (counter)examples, we show that such stability robustness with respect to the inclusion of inputs cannot hold in general, not even for impulsive systems with time-invariant flow and jump maps. In particular, we show that zero-input global uniform asymptotic stability (0-GUAS) does not imply converging input converging state (CICS), and that 0-GUAS and uniform bounded-energy input bounded state (UBEBS) do not imply integral input-to-state stability (iISS). We also comment on available existing results that, however, show that suitable constraints on the allowed impulse-time sequences indeed make some of these robustness properties possible.
\end{abstract}

\begin{keyword}
  Impulsive systems, nonlinear systems, time-varying systems, input-to-state stability, hybrid systems.
\end{keyword}

\end{frontmatter}


\section{Introduction}
\label{sec:introduction}

Impulsive systems are dynamic systems whose state evolves continuously most of the time but may exhibit jumps (discontinuities) at isolated time instants \citep{lakbai_book89}. The set of time instants when jumps occur are part of the impulsive system definition. We consider impulsive systems where the continuous dynamics is governed by a differential equation, characterized by the flow map, 
 and where the state value immediately after a jump is given by a static equation, namely the jump map.

The stability properties of impulsive systems with or without inputs depend on the interplay between the continuous and the impulsive behaviors \citep{heslib_auto08} given by the flow map, the jump map, and the set of impulse times. These properties have been extensively studied and several sufficient conditions for asymptotic, input-to-state and integral input-to-state stability were obtained, even for systems with time-varying flow and jump maps and under the presence of time delays
\citep[see][among others]{heslib_auto08,%
chezhe_auto09,%
chezhe_auto09b,%
liuliu_auto11,%
briseu_scl12,%
wanwan_ietcta13,%
dasmir_siamjco13,%
liudou_scl14,%
gaowan_nahs16,%
barval_mjm16,%
dasfek_nahs17,%
manhai_tac19arxiv,%
honzha_ijrnc19,%
fekbaj_auto19}. Most of these references assume the existence of a Lyapunov-type function which may provide some degree of robustness with respect to the inclusion of disturbances or modeling errors. In addition, some of these also explicitly address robustness of stability. 

This paper is concerned with a fundamental question: whether asymptotic stability of an impulsive system without disturbances or with zero input may guarantee some kind of robustness of the system with respect to the inclusion of inputs/disturbances. 

For nonimpulsive (time-varying) systems under reasonable continuity assumptions on the function $f$ defining the dynamics (local Lipschitz continuity, uniformly with respect to the time variable), the uniform asymptotic stability of the  system when the input or the disturbance is identically zero (0-UAS) guarantees various kinds of robustness properties of the system with inputs/disturbances. For example, it is known that 0-UAS implies that the system with inputs/disturbances is totally stable (TS) \citep{hahn_book67} which roughly speaking means that trajectories corresponding to small initial conditions and small inputs or disturbances remain near the equilibrium point. Another robustness property implied by the 0-UAS property is the so-called converging input converging state (CICS): every bounded trajectory which lies in the domain of attraction and corresponds to an input or disturbance that approaches zero must also converge to the equilibrium \citep{sontag_tac03,ryason_scl06, manhai_scl17}.
%
Yet another of these properties is captured by the fact that the combination of global uniform asymptotic stability under zero input/disturbance (0-GUAS) and a uniformly bounded-energy input/bounded state (UBEBS) property implies integral input-to-state stability \citep[iISS,][]{sontag_scl98}, as proved initially by \\
 \citet{angson_dc00} for time-invariant systems and extended to time-varying systems in \citet{haiman_tac18}. A consequence of this fact can be loosely stated as follows: for some suitable way of measuring input energy, if the input energy is finite, then the state will converge to zero. This property can also be interpreted as providing some robustness of stability with respect to the inclusion of inputs, but provided that the state remains bounded under inputs of bounded energy. 

In this paper, we address impulsive systems where both the flow and jump maps could be time-varying and depend on external inputs. We show that even under stronger uniform boundedness, and state and input Lipschitz continuity assumptions on the flow and jump maps, 0-GUAS  implies neither CICS nor TS, and 0-GUAS and UBEBS do not imply iISS. We show that this is so even when 0-GUAS is uniform not only with respect to initial time but also over all possible impulse-time sequences and, moreover, also when the flow and jump maps are time-invariant and the former is input-independent. A very salient feature of our negative results is that they do not depend on how the input energy is measured; in other words, they are valid for any UBEBS and iISS gains, even of course when these could be different from each other. The results that we provide thus clearly illustrate that the stated nonrobustness of impulsive systems is of a very profound nature. This lack of robustness is directly related neither to how the input may enter into the system equations nor to the regularity of the flow and jump maps; it is indeed related to the fact that the definition of 0-GUAS usually considered in the literature of impulsive systems is too weak for guaranteeing any meaningful robustness property.

For (time-invariant) hybrid systems, it is known that 0-GUAS is robust with respect to the inclusion of inputs or disturbances \citep{goesan_book12}. For this to happen, however, stability must take hybrid time into account, thus causing decay towards the equilibrium set not only when continuous time elapses but also whenever jumps occur. In this regard, we have already shown that if, in addition to elapsed time, the number of jumps is taken into account in the definition of asymptotic stability, then 0-GUAS and UBEBS imply iISS  \citep[see][]{haiman_rpic19,haiman_auto19b_arxiv}. The main contribution of this paper is thus to show that taking the number of jumps into account within the stability definition is unavoidable for the stated robustness to be possible.

\textbf{Notation.} $\N$, $\N_0$, $\R$, and $\R_{\ge 0}$ denote the natural numbers, the nonnegative integers, the reals, and the nonnegative reals, respectively. $|x|$ denotes the Euclidean norm of $x \in \R^p$. We write $\alpha\in\K$ if $\alpha:\R_{\ge 0} \to \R_{\ge 0}$ is continuous, strictly increasing and $\alpha(0)=0$, and $\alpha\in\Ki$ if, in addition, $\alpha$ is unbounded. We write $\beta\in\KL$ if $\beta:\R_{\ge 0}\times \R_{\ge 0}\to \R_{\ge 0}$, $\beta(\cdot,t)\in\Ki$ for any $t\ge 0$ and, for any fixed $r\ge 0$, $\beta(r,t)$ monotonically decreases to zero as $t\to \infty$. $\lceil r \rceil$ is the least integer greater than or equal to $r\in \R$.

%

\section{Problem Statement}
\label{sec:prob-stat}
Consider the time-varying impulsive system with inputs $\Sigma$ defined by the equations
\begin{subequations}
	\label{eq:is}
	\begin{align}
	\label{eq:is-ct}
	\dot{x}(t) &=f(t,x(t),u(t)),\phantom{h^-}\quad\text{for } t\notin \gamma,    \displaybreak[0] \\
	\label{eq:is-st}
	x(t) &=h(t,x(t^-),u(t)),\phantom{f} \quad\text{for } t\in \gamma,
	\end{align}
\end{subequations}
where $t\ge 0$, $x(t)\in \R^n$, $u(t)\in \R^m$, $f$ and $h$ are functions from $\R_{\ge 0}\times \R^n\times \R^m$ to $\R^n$ such that $f(t,0,0)=0$ and $h(t,0,0)=0$ for all $t\ge 0$, and $\gamma=\{\tau_k\}_{k=1}^{\infty}$, with $0<\tau_1<\tau_2<\cdots$ and $\lim_{k\to \infty}\tau_k=\infty$, is the impulse-time sequence.
By ``input'', we mean a Lebesgue measurable and locally bounded function $u:[0,\infty)\to \R^m$; we denote by $\U$ the set of all the inputs. An input $u$ could represent, e.g., a control input or a disturbance input. We define for convenience $\tau_0=0$.

A solution of $\Sigma$ corresponding to an initial time $t_0\ge 0$, an initial state $x_0\in \R^n$ and an input $u\in \U$ is a right-continuous 
function $x:[t_0,T_x)\to \R^n$ such that $x(t_0)=x_0$ and: 
\begin{enumerate}[i)]
	\item $x$ is locally absolutely continuous on each nonempty interval $J$ of the form $J=[\tau_k,\tau_{k+1})\cap [t_0,T_x)$, with $k\ge 0$, and $\dot{x}(t)=f(t,x(t),u(t))$ for almost all $t\in J$; and \label{item:solflow}
	\item for all $\tau_k\in (t_0,T_x)$, the left limit $x(\tau_k^-)$ exists and is finite, and $x(\tau_k) = h(\tau_k,x(\tau_k^-),u(\tau_k))$.\label{item:soljump}
\end{enumerate}
The solution $x$ is said to be maximally defined if no other solution $y:[t_0,T_y)\to \R^n$ satisfies $y(t) = x(t)$ for all $t\in [t_0,T_x)$ and has $T_y > T_x$. A solution $x$ is forward complete if $T_x=\infty$, and $\Sigma$ is forward complete if every maximal solution of $\Sigma$ is forward complete.

We will use $\T(t_0,x_0,u)$ to denote the set of maximally defined solutions of $\Sigma$ corresponding to initial time $t_0$, initial state $x_0$ and input $u$.

An important problem in control theory is understanding the dependence of state trajectories on the inputs, in particular when the inputs are bounded or when they converge to zero as $t\to \infty$. In order to make the latter precise, given an input $u \in \U$, an interval $I\subset \R_{\ge 0}$, and functions $\rho_1,\rho_2\in\Ki$, we define
\begin{align} 
\| u_I \|_{\infty} &:= \max\left\{ \esssup_{t\in I} |u(t)| , \sup_{t\in \gamma\cap I} |u(t)| \right\}, \label{eq:iss-norm}\\
\| u_I \|_{\rho_1,\rho_2} &:= \int_I \rho_1(|u(s)|) ds + \sum_{s\in \gamma\cap I} \rho_2(|u(s)|). \label{eq:iiss-norm}
\end{align}
When $I=[t_0,\infty)$ we simply write $u_{t_0}$ instead of $u_I$. In both input bounds the values of $u$ at the instants $t\in\gamma$ are explicitly taken into account, since these values may instantaneously affect the state trajectory. 

The following stability properties give characterizations of the behavior of the state trajectories when the inputs are bounded, converge to zero, or are identically zero. In what follows, $\mathbf{0}$ denotes the identically zero input.
\begin{defin} 
  The impulsive system $\Sigma$ is 
  \begin{enumerate}[a)] 
  \item zero-input globally uniformly asymptotically stable (0-GUAS) if there exists $\beta \in \KL$ such that for all $t_0\ge 0$, $x_0\in \R^n$, and $x\in \T(t_0,x_0,\mathbf{0})$, it happens that $x$ is forward complete and for all $t\ge t_0$
    \begin{align}
      \label{eq:0-guas}
      |x(t)| &\le \beta \left (|x_0|,t-t_0 \right);
    \end{align}
  \item uniformly bounded-energy input/bounded state (UBEBS) 
    if $\Sigma$ is forward complete and there exist $\alpha,\rho_1,\rho_2\in\ki$ and $c\ge 0$ such that 
    \begin{align}
      \label{eq:cubebs}
      \alpha(|x(t)|) &\le |x_0| + \| u_{(t_0,t]} \|_{\rho_1,\rho_2} + c
    \end{align}
    for all $t\ge t_0\ge 0$, $x_0\in \R^n$, $u\in \U$, and $x\in \T(t_0,x_0,u)$; 
  \item integral input-to-state stable (iISS) if $\Sigma$ is forward complete and there exist $\beta \in \KL$ and $\alpha,\rho_1,\rho_2 \in \ki$ such that for all $t_0\ge 0$, $x_0\in \R^n$, $u\in \U$ and $x\in \T(t_0,x_0,u)$, it happens that for all $t\ge t_0$,
    \begin{align} 
      \label{eq:ciiss}
      \hspace{-3mm}\alpha(|x(t)|) \le \beta \left (|x_0|,t-t_0 \right) + \| u_{(t_0,t]} \|_{\rho_1,\rho_2};
    \end{align}
  \item converging-input converging-state (CICS) if every forward complete and bounded solution $x\in \T(t_0,x_0,u)$, with $t_0\ge 0$, $x_0\in \R^n$, and $u \in \U$ such that $\|u_t\|_{\infty}\to 0$ as $t\to \infty$, satisfies $x(t)\to 0$ as $t\to \infty$; \label{item:cics}
  \item totally stable (TS) if $f(t,\xi,\mu)\equiv f_0(t,\xi)+\mu$, $h(t,\xi,\mu)\equiv h_0(t,\xi)+\mu$ and, for every $\varepsilon>0$ there exists $\delta>0$ such that every solution $x\in \T(t_0,x_0,u)$, with $t_0\ge 0$, $x_0\in \R^n$ with $|x_0|<\delta$, and $u \in \U$ such that $\|u_{t_0}\|_{\infty}<\delta$, satisfies $|x(t)|<\varepsilon$ for all $t\in [t_0,T_x)$. \label{item:ts}
  \end{enumerate}
\end{defin}
\begin{rem} 
  The definition of total stability given above is a natural generalization, to impulsive systems, of the one usually considered in the literature of ordinary differential equations \citep[see][Chapter VII]{hahn_book67}. \mer
\end{rem}
\begin{rem}
  \label{rem:uniform}
  We say that (\ref{eq:is}) is 0-GUAS (or UBEBS) uniformly with respect to a set $\S$ of impulse-time sequences if every system $\Sigma$ defined by (\ref{eq:is}) with $\gamma\in\S$ is 0-GUAS (or UBEBS) and, moreover, the bound (\ref{eq:0-guas}) [or (\ref{eq:cubebs})] holds with the same $\beta$ (or $\rho_1,\rho_2,\alpha$ and $c$) for every such system.\mer
\end{rem}

From the definition of these stability properties, it easily follows that iISS implies $0$-GUAS and UBEBS. For nonimpulsive time-varying systems and under appropriate assumptions on the flow map $f$, it was proved that $0$-GUAS and UBEBS imply iISS \citep[Theorem 1]{haiman_tac18}, and that $0$-GUAS implies TS \citep[Theorem 56.4]{hahn_book67} and CICS \citep[Section 3.2]{manhai_scl17}. The question that naturally arises is thus whether the same implications remain true for impulsive systems. 

In \citet[Theorem 3.2]{haiman_aadeca18} it was shown that $0$-GUAS and UBEBS imply iISS for time-varying impulsive systems, assuming that the impulse-time sequence satisfies the so-called uniform incremental boundedness (UIB) condition \citep[Definition 3.2]{haiman_aadeca18}, and in \citet{haiman_rpic19} that the same implication holds without the UIB condition if the $0$-GUAS property is strengthened.

The main result of the current paper is to show that the mentioned implications do not remain valid if $0$-GUAS is understood in the usual sense and the UIB condition is not assumed. We will do so through counterexamples in the next section.

\section{Main results}
\label{sec:ce}

In this section, we show that even if the flow and jump maps are time-invariant, $0$-GUAS implies neither CICS nor TS, and $0$-GUAS and UBEBS do not imply iISS. In addition, we will show that these negative results remain true when 0-GUAS and UBEBS are uniform over all possible impulse-time sequences (as per Remark~\ref{rem:uniform}) but the jump map is allowed to be time-varying.

\subsection{Impulsive system equations}
\label{sec:impuls-sys-eqs}

Consider the scalar impulsive system $\Sigma$, with a single input, of the form (\ref{eq:is}) with
\begin{align}
  f(t,\xi,\mu) &= -\xi, \label{eq:fdef}\\
  \label{eq:hdef}
  h(t,\xi,\mu) &= \hat h(t,\xi)+\mu,\\
  \label{eq:hath}
   \hat h(t,\xi) &= \begin{cases}
      \bar h(|\xi|) &\text{if }|\xi|\le e^{-\sigma(t)},\\
      |\xi|&\text{otherwise},
    \end{cases}
\end{align}
where $\bar h:\R_{\ge 0}\to \R_{\ge 0}$ is defined by
\begin{align}
 \label{eq:barh}
\bar h(r) =
\begin{cases}
0 &\text{if }r=0, \\
e^{\lceil\ln r\rceil}  &\text{if }{\scriptstyle\lceil\ln r\rceil-0.5 < \ln r\le 0},\\
{\scriptstyle(1+e^{0.5})r - e^{\lceil\ln r \rceil-0.5}}
&\text{if } {\scriptstyle\ln r \le  \lceil\ln r \rceil -0.5 \le 0},\\
r &\text{otherwise},
\end{cases}
\end{align}
and the function $\sigma$ is defined as follows.
For $i\in\N$, let $S_i$ be the increasing and finite sequence containing the first $i$ natural numbers, i.e. $S_1=\{1\}$, $S_2=\{1,2\}$, etc., and construct the infinite sequence $\{\sigma_k\}_{k=1}^\infty$ by concatenating $S_1,S_2,\ldots$, so that the first elements of $\{\sigma_k\}$ are $1,1,2,1,2,3,1,2,3,4,\ldots$ Then, define
\begin{align*}
  \sigma(t):= \sigma_i-1 \quad\text{when }t\in [i-1,i), \ i\in\N.
\end{align*}
The function $\sigma$ can be equivalently defined as follows:
\begin{align}
  \label{eq:sigmaeqdef}
  \sigma(t) &= i,\quad \text{for }t \in [s_j+i,s_j+i+1), &i &=0,\ldots,j,\\
 \label{eq:sj}
 s_j &:= \sum_{i=0}^j i, &j &\in\N_0.
\end{align}

Figure~\ref{fig:h0xi0} illustrates the function $h(0,\cdot,0)$.
\begin{figure}[!h]
  \centering
  \includegraphics[width=\columnwidth]{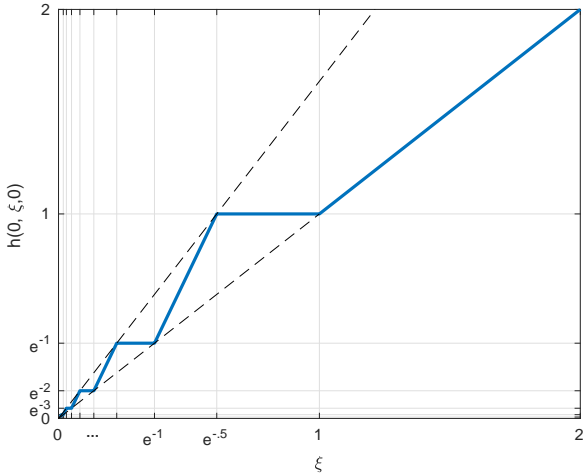}
  \caption{The function $h(0,\xi,0)$ for $\xi \ge 0$.}
  \label{fig:h0xi0}
\end{figure}
The function $h$ in (\ref{eq:hdef}) is nondecreasing in the absolute value of the state variable when the other variables are fixed, i.e. $h$ satisfies $h(t,|\xi_1|,\mu) \ge h(t,|\xi_2|,\mu)$ for all $t\ge 0$, $\mu \in \R$ and $|\xi_1| \ge |\xi_2| \ge 0$.
Note that system $\Sigma$ with $f$ and $h$ defined, respectively, by (\ref{eq:fdef}) and (\ref{eq:hdef}), and $\gamma$ any impulse-time sequence, is forward complete, since the differential equation (\ref{eq:is-ct}) has no finite escape time. 
\begin{rem}
  \label{rem:time-invariant}
  If the impulse-time sequence $\gamma$ is such that $\sigma(t) = 0$ for every $t\in\gamma$, then the evolution of (\ref{eq:is}) with (\ref{eq:fdef})--(\ref{eq:hdef}) becomes equivalent to that arising when $h$ in (\ref{eq:hdef}) is replaced by the time-invariant jump map $h_{\text{ti}}(\xi,\mu) = h(0,\xi,\mu)$. Therefore, in such a case the system (\ref{eq:is}) with (\ref{eq:fdef})--(\ref{eq:hdef}) is equivalent to an impulsive system having time-invariant flow and jump maps.\mer
\end{rem}
The following property of $\sigma$ will be instrumental in establishing $0$-GUAS.
\begin{lema}
	\label{lem:phi-eps-delta}
	For every $k\in \N_0$ there exists $\bar T_k>0$ such that for every $t_0 \ge 0$ there exists $t_0 \le t^*\le t_0+\bar T_k$ such that $\sigma(s) \ge k+1 $ for all $t^* \le s \le t^* + 1$.
\end{lema}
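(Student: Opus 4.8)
The plan is to exploit the explicit block structure of $\sigma$ given by (\ref{eq:sigmaeqdef})--(\ref{eq:sj}). For each $j\in\N_0$, call the half-open interval $[s_j,s_{j+1})$ the $j$-th block; by (\ref{eq:sigmaeqdef}) the function $\sigma$ takes on this block exactly the values $0,1,\dots,j$, being equal to $i$ on $[s_j+i,s_j+i+1)$. Consequently, the super-level set $\{t\ge 0:\sigma(t)\ge k+1\}$ intersects the $j$-th block precisely in the interval $P_j:=[s_j+k+1,s_{j+1})$, which is empty when $j\le k$ and has length $(j+1)-(k+1)=j-k$ when $j\ge k+1$.

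First I would characterize the admissible starting times $t^*$. Since $\sigma(s_{j+1})=0$ for every $j$ (the value resets at each block boundary) and $k+1\ge 1$, any closed unit interval $[t^*,t^*+1]$ on which $\sigma\ge k+1$ cannot contain a block boundary and must therefore lie inside a single $P_j$. Because $P_j$ is open on the right, the condition $[t^*,t^*+1]\subseteq P_j$ is equivalent to $s_j+k+1\le t^*<s_{j+1}-1$, which has a solution only when $P_j$ is longer than one unit, i.e. when $j-k>1$, that is $j\ge k+2$. Thus the set of admissible $t^*$ is $\bigcup_{j\ge k+2}[s_j+k+1,\,s_{j+1}-1)$, a union of nonempty intervals.

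It then remains to bound the gaps in this union so as to produce a uniform waiting time $\bar T_k$. The first admissible interval starts at $s_{k+2}+k+1$, and a direct computation using (\ref{eq:sj}) shows that between the admissible interval for $j$ and that for $j+1$ the gap has the constant length $(s_{j+1}+k+1)-(s_{j+1}-1)=k+2$. I would therefore set $\bar T_k:=s_{k+2}+k+1$ and argue by cases on $t_0$: if $t_0\le s_{k+2}+k+1$ then $t^*:=s_{k+2}+k+1$ lies in the first admissible interval and satisfies $t_0\le t^*\le t_0+\bar T_k$; if $t_0> s_{k+2}+k+1$ then $t_0$ either already lies in some admissible interval (take $t^*=t_0$) or lies in a gap of length $k+2\le\bar T_k$, whence the left endpoint of the next admissible interval furnishes a $t^*\in[t_0,t_0+\bar T_k]$. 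In all cases $\sigma(s)\ge k+1$ for every $s\in[t^*,t^*+1]$.

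The main obstacle, and the point requiring the most care, is the interplay between the closedness of the target interval $[t^*,t^*+1]$ and the right-open, resetting structure of $\sigma$: it is essential that the unit interval be strictly contained in a single plateau $P_j$, so as to avoid the point $t^*+1$ falling on a block boundary where $\sigma=0$, and this is exactly what forces the constraint $j\ge k+2$ and fixes the size of the gaps.
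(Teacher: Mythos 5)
Your proof is correct and follows essentially the same route as the paper's: the same constant $\bar T_k = s_{k+2}+k+1$ and the same case analysis on where $t_0$ falls relative to the block structure of $\sigma$. Your explicit description of the admissible set $\bigcup_{j\ge k+2}[s_j+k+1,\,s_{j+1}-1)$ with gaps of length $k+2$ is a slightly cleaner packaging which, as a bonus, correctly excludes the right endpoint $t_0=s_{j+1}-1$, where the paper's choice $t^*=t_0$ would place $t^*+1$ on a block boundary with $\sigma=0$.
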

\begin{proof}
	Let $k\in \N_0$. Set $\bar T_k = k+1 + s_{k+2}$, with $s_j=\sum_{i=0}^j i$. According to (\ref{eq:sigmaeqdef}), we have $\sigma(s) \ge k+1$ for $s\in [s_{j} + k+1, s_{j+1})$ for all $j\ge k+1$.
	
	If $t_0 \le \bar T_k$, take $t^*=\bar T_k$. 
	
	If $t_0 > \bar T_k$, let $\kappa := \max \{j \in \N_0 : s_j \le t_0 \}$. Clearly, $\kappa \ge k+2$ and $s_{\kappa}\le t_0<s_{\kappa+1}$. If $s_{\kappa}\le t_0 \le s_{\kappa}+ k+1$, take $t^* = s_{\kappa} + k+1$.
	
	If $s_{\kappa} + k+1<t_0 \le s_{\kappa+1} - 1$, take $t^*=t_0$. 
	
	Otherwise, take $t^*=s_{\kappa+1} + k+1$.
	\qed
\end{proof}

\subsection{$0$-GUAS and UBEBS}
\label{sec:0-guas-ubebs}

\begin{lema} 
  \label{lem:0guas-ubebs}
  The impulsive system $\Sigma$ in (\ref{eq:is}), with $f$ and $h$ defined by (\ref{eq:fdef}) and (\ref{eq:hdef}), respectively, and $\gamma$ any impulse-time sequence, is 0-GUAS and UBEBS. Moreover, (\ref{eq:is}) with (\ref{eq:fdef})--(\ref{eq:hdef}) is 0-GUAS and UBEBS both uniformly with respect to the set of all impulse-time sequences.
\end{lema}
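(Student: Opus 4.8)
The plan is to pass to the logarithmic magnitude coordinate $y=\ln|x|$ and to exploit two elementary properties of $\bar h$ that follow directly from (\ref{eq:barh}): for $r\in(0,1]$ one has (P1) $\bar h(r)\ge r$ (with equality iff $\ln r\in\Z$) and (P2) $\bar h(r)\le e^{\lceil\ln r\rceil}$. Property (P2) is the crucial one: in log coordinates it reads $\ln\bar h(r)\le\lceil\ln r\rceil$, so a jump can never push $y$ strictly above its own ceiling. Since the flow $\dot x=-x$ gives $\tfrac{d}{dt}y=-1$ between jumps and $\hat h$ depends only on $|\xi|$, I would first record the two facts that drive everything: (i) along the flow $y$ decreases at unit rate; and (ii) at a jump either $|x|$ exceeds the threshold $e^{-\sigma(t)}$ and the magnitude is preserved, or it is below and $y$ maps to $\ln\bar h(|x|)\le\lceil y\rceil$; in both cases the post-jump value of $y$ is at most $\lceil y^-\rceil$.

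UBEBS I expect to be short. From (P2), since $\lceil\ln r\rceil\le 0$ for $r\le 1$, one gets $\hat h(t,\xi)\le\max(|\xi|,1)$, so at a jump $|x(\tau_k)|\le\hat h(\tau_k,x(\tau_k^-))+|u(\tau_k)|\le\max(|x(\tau_k^-)|,1)+|u(\tau_k)|$. Setting $W(t):=\max(|x(t)|,1)$, this yields $W(\tau_k)\le W(\tau_k^-)+|u(\tau_k)|$, while $W$ is nonincreasing along the flow because $|x|$ is. Telescoping over the jumps in $(t_0,t]$ gives $\max(|x(t)|,1)\le\max(|x_0|,1)+\sum_{\tau_k\in(t_0,t]}|u(\tau_k)|$, so (\ref{eq:cubebs}) holds with $\alpha=\rho_1=\rho_2=\mathrm{id}$ and $c=1$; none of these depends on $\gamma$, giving the uniform UBEBS.

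For 0-GUAS the key structural consequence of (ii) is that for every integer $m\le 0$ the set $\{\,|x|\le e^{m}\,\}$, i.e. $\{y\le m\}$, is forward invariant for every input-free trajectory and every $\gamma$: a jump from $y\le m$ lands at $\le\lceil y\rceil\le m$, and the flow only decreases $y$. Taking $m=0$ and noting that $y>0$ forces $\hat h(t,\cdot)=|\cdot|$ (so $y$ simply decreases until it reaches $0$) gives at once the uniform bound $|x(t)|\le\max(|x_0|,1)$. I would then show the magnitude descends one integer level at a time using Lemma~\ref{lem:phi-eps-delta}. Fix $m\le 0$ and suppose $y\le m$; apply the lemma with $k=-m$ to obtain, within time $\bar T_{-m}$ of the current instant, a unit interval $[t^*,t^*+1]$ on which $\sigma\ge -m+1$. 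There the jump threshold forces pumping to be inactive whenever $y>m-1$ (pumping needs $y\le-\sigma\le m-1$), so while $y\in(m-1,m]$ the value $y$ strictly decreases at unit rate and is merely preserved at jumps; as $y(t^*)\le m$ it must reach $m-1$ by time $t^*+1$, after which invariance of $\{y\le m-1\}$ keeps it there, whence $y(t^*+1)\le m-1$. Iterating from level $0$ downward, and using that invariance prevents the inter-interval pumping from ever undoing a completed crossing, I obtain $|x(t)|\le e^{-M}$ for all $t-t_0\ge\max(\ln|x_0|,0)+\sum_{j=0}^{M-1}(\bar T_j+1)$; combining this decay schedule with $|x(t)|\le\max(|x_0|,1)$ yields, by the standard construction, a single $\beta\in\KL$ valid for all $\gamma$.

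The main obstacle is the descent step on a special interval: one must argue carefully that, despite arbitrarily many and arbitrarily dense jumps, the disabled pumping above level $m-1$ together with forward invariance of $\{y\le m-1\}$ forces the crossing within the unit interval and makes it permanent. This is precisely where (P2) and Lemma~\ref{lem:phi-eps-delta} are both indispensable, and where the independence of all estimates from $\gamma$ delivers the claimed uniformity. The remaining steps -- verifying (P1)/(P2) from (\ref{eq:barh}) and packaging the level-crossing schedule into a bona fide $\KL$ bound -- are routine.
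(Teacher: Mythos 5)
Your proposal is correct and follows essentially the same route as the paper's proof: integer-sublevel invariance of $|x|$ via the ceiling property of $\bar h$ gives uniform stability, Lemma~\ref{lem:phi-eps-delta} forces a one-level descent on a unit interval where the jump map acts as the identity (this is exactly the paper's Claim~1), and UBEBS follows from the bound $|x(t)|\le |x_0|+\|u_{(t_0,t]}\|+1$ with data independent of $\gamma$. The only differences are presentational (logarithmic coordinates, and a telescoping argument with $W=\max(|x|,1)$ in place of the paper's first-failure-time contradiction), not substantive.
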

\begin{proof}
We will prove that $\Sigma$ is $0$-GUAS by establishing that: i) $\Sigma$ is $0$-input globally uniformly stable ($0$-GUS), i.e. there exists a class-$\Ki$ function $\nu$ such that for all $x\in \T(t_0,x_0,\mathbf{0})$ with $t_0\ge 0$ and $x_0\in \R$ it happens that $|x(t)|\le \nu(|x_0|)$ for all $t\ge t_0$; and ii) for all $0<\varepsilon< r$ there exists $T=T(r,\varepsilon)\ge 0$ such that for all $x\in \T(t_0,x_0,\mathbf{0})$ with $t_0\ge 0$ and $|x_0|\le r$ we have that $|x(t)|\le \varepsilon$ for some $t\in [t_0,t_0+T]$.

Since for all $x\in \T(t_0,x_0,\mathbf{0})$, $|x|\in \T(t_0,|x_0|,\mathbf{0})$, and $x(t)\equiv 0$ when $x_0=0$, we only have to establish i) and ii) for positive initial conditions $x_0$. Let $x\in \T(t_0,x_0,\mathbf{0})$ with $t_0\ge 0$ and $x_0>0$. Then $x(t)> 0$ for all $t\ge t_0$. If $x_0 > 1$ then $x(t)\le x_0$ for all $t\ge t_0$. Suppose that the latter is not true. Since $x(t)$ is nonincreasing between consecutive impulse times, the first time $t$ for which $x(t)>x_0$ must be an impulse time $t\in \gamma\cap (t_0,\infty)$. For such a time $t$ we have that $x(t^-)\le x_0$ and $x(t)=h(t,x(t^{-}), 0)$. Since $h$ is non decreasing in the absolute value of its second argument, we have that $x(t)=h(t,x(t^{-}), 0)\le h(t,x_0,0)=x_0$, which is absurd. Suppose now that $0<x_0 \le 1$. Let $k(x_0)=\lceil \ln x_0 \rceil \le 0$. Then $x(t)\le e^{k(x_0)}$ for all $t\ge t_0$. Suppose on the contrary that $x(t)> e^{k(x_0)}$ for some $t\ge t_0$. Since $x$ is nonincreasing between consecutive impulse-times and $x(t_0)\le e^{k(x_0)}$, the first time $t\ge t_0$ for which $x(t)> e^{k(x_0)}$ has to be an impulse-time $t\in \gamma\cap (t_0,\infty)$. Then we have that $x(s)\le  e^{k(x_0)}$ for all $s\in [t_0,t)$ and $x(t)=h(t,x(t^-),0)>e^{k(x_0)}$. Since $x(t^-)\le  e^{k(x_0)}$ and $h$ is nondecreasing in the absolute value of its second argument, we have that $h(t,x(t^-),0)\le h(t,e^{k(x_0)},0)=e^{k(x_0)}$, arriving to a contradiction.

Since the function $\bar \nu(r)=r$ for $r> 1$ and $\bar \nu(r)=e^{\lceil \ln r \rceil}$ for $0<r\le 1$ is non decreasing and $\lim_{r\to 0^+} \bar \nu(r)=0$, there exists $\nu \in \Ki$ such that $\bar \nu (r)\le \nu(r)$ for all $r\ge 0$. In consequence $\Sigma$ satisfies item i) with such a function $\nu$.

For establishing ii) we first prove the following.
\begin{claim} \label{clm:A} Let $k\in \N_0$. Then there exists $T_k>0$ such that for all $x\in \T(t_0,x_0,\mathbf{0})$ with $t_0\ge 0$ and $0<x_0\le e^{-k}$ there is a $t\in [t_0,t_0+T_k]$ such that $x(t)\le e^{-(k+1)}$.
 \end{claim}
 \textbf{Proof of Claim~\ref{clm:A}:} Suppose that $x\in \T(t_0,x_0,\mathbf{0})$ with $t_0\ge 0$ and $0<x_0\le e^{-k}$. Note that $0<x(t)\le e^{k(x_0)}\le e^{-k}$ for all $t\ge t_0$, with $k(x_0)$ as defined above. Let $\bar T_k>0$ and $t^*\in [t_0,t_0+\bar T_k]$ be the quantities coming from Lemma \ref{lem:phi-eps-delta}. Suppose that $x(t)> e^{-(k+1)}$ for all $t\in [t^*,t^*+1]$. Since $\sigma(t)\ge k+1$ on $[t^*,t^*+1]$, from the definitions of $f$ and $h$ we have that $x(t)=x(t^*)e^{-(t-t^*)}$ for all $t\in [t^*,t^*+1]$. Therefore $x(t^*+1)=x(t^*)e^{-1}\le  e^{-(k+1)}$, which is absurd. In consequence there exists $t\in [t^*,t^*+1]$ so that $x(t)\le e^{-(k+1)}$, and the claim follows by taking $T_k=\bar T_k+1$. \mer

We proceed to prove ii). Let $0<\varepsilon<r$. Suppose that $r\le 1$. Let $t_0\ge 0$, $0<x_0\le r$ and $x\in \T(t_0,x_0,\mathbf{0})$.  Pick $k_1,k_2\in \N_0$ such that $e^{-k_2}<\varepsilon <r \le e^{-k_1}$. Let $T_{k_1}, T_{k_1+1},\ldots,T_{k_2-1}$ be the quantities coming from Claim \ref{clm:A} corresponding to $k=k_1,\ldots,k_2-1$. Applying Claim \ref{clm:A} in a recursive way, it follows that there exists $t_0 \le t_1 \le \cdots \le t_{k_2-k_1}$, with $t_1-t_0\le T_{k_1},\ldots, t_{k_2-k_1}-t_{k_2-k_1-1}\le T_{k_2-1}$ such that $x(t_j)\le e^{-(k_1+j)}$ for all $j=1,\ldots,k_2-k_1$. In consequence $x(t_{k_2-k_1})\le e^{-k_2}<\varepsilon$ and $t_{k_2-k_1}-t_0\le \sum_{i=0}^{k_2-k_1-1}T_{k_1+j}$. So item ii) holds in this case with $T(r,\varepsilon)=\sum_{i=0}^{k_2-k_1-1}T_{k_1+j}$. 

Suppose now that $r>1$. If $x_0\le 1$, then $x(t)<\varepsilon$ for some $t\in [t_0,t_0+T(1,\varepsilon)]$. If $x_0>1$, then by solving the equations (\ref{eq:is}) on the interval $I=[t_0,t_0+\ln x_0]$  it follows that $x(t)=x_0 e^{-(t-t_0)}$ for all $t\in I$. So $x(t_0+\ln x_0)=1$. In consequence, there exists $t\in [t_0+\ln x_0,t_0+\ln x_0+T(1,\varepsilon)]$ such that $x(t)<\varepsilon$, and ii) follows with $T(r,\varepsilon)=\ln r+T(1,\varepsilon)$ in this case.

The existence of a function $\beta \in \KL$ as in the definition of $0$-GUAS follows from i) and ii) and the steps used in the proof of \cite[Lemma 2.5]{linson_jco96}. The fact that the same $\beta$ can be used for every impulse-time sequence $\gamma$ follows from the fact that neither the function $\nu$ in i) nor the time $T(r,\varepsilon)$ in ii) depends on the specific $\gamma$. 

The system $\Sigma$ is UBEBS because for all $x\in \T(t_0,x_0,u)$ with $t_0\ge 0$, $x_0\in \R$ and $u\in \U$, 
\begin{align} \label{eq:ubebsce}
  |x(t)| &\le |x_0| + \|u_{(t_0,t]}\| + 1
\end{align}
holds for $\|u_I\|$ defined as
\begin{align*}
  \|u_I\| &= \int_I|u(t)| dt + \sum_{t\in\gamma\cap I} |u(t)|.
\end{align*}
For a contradiction, suppose there is $t\ge 0$ such that (\ref{eq:ubebsce}) does not hold. Since $|x|$ is nonincreasing between consecutive impulse times, the first time $t^*$ for which (\ref{eq:ubebsce}) is not true must satisfy $t^*\in \gamma \cap (t_0,\infty)$. Then $|x(t)|\le |x_0| + \|u_{(t_0,t]}\| + 1$ for all $t\in [t_0,t^*)$ and  $|x(t^*)|> |x_0| + \|u_{(t_0,t^*]}\| + 1$. Since  $|x({t^*}^-)|\le |x_0| + \|u_{(t_0,t^*)}\| + 1$, it follows that 
\begin{align*}
|x(t^*)|&\le h(t^*,|x({t^*}^-)|,u(t^*))\\
&\le h(t^*,|x_0| + \|u_{(t_0,t^*)}\| + 1,u(t^*))\\
&\le |x_0| + \|u_{(t_0,t^*)}\|+1+ |u(t^*)| \\
&= |x_0| + \|u_{(t_0,t^*]}\| + 1,
\end{align*}
Which is absurd. Here we have used that $h$ is nondecreasing in its second argument and the fact that $h(t,\xi,\mu)=|\xi|+\mu$ if $|\xi|\ge 1$. Since~(\ref{eq:ubebsce}) holds for every impulse-time sequence $\gamma$, then we have also established UBEBS uniformly with respect to the set of all impulse-time sequences. \qed
\end{proof}
\begin{rem}
	It can be verified that the functions $f$ and $g(t,\xi,\mu)=h(t,\xi,\mu)-\xi$ belong to $\AL$ with the class $\AL$ as defined in \citet[Definition 3.1]{haiman_aadeca18}. So, $\Sigma$ is iISS for every impulse time sequence $\gamma$ which is UIB according to \citet[Theorem 3.2]{haiman_aadeca18}. See also \citet{haiman_rpic19}. 
\end{rem}

\subsection{$0$-GUAS implies neither CICS nor TS}
\label{sec:0-guas-cics}

In this section, we show that for the impulse-time sequence $\gamma^*$ defined below, the system $\Sigma$ with $f$ and $h$ given by (\ref{eq:fdef})--(\ref{eq:hdef}) is not CICS and that if we consider the function $f(t,\xi,\mu)=-\xi+\mu$ instead of $f(t,\xi)=-\xi$, then the resulting system $\Sigma$ is not TS.  

For $N\in \N$, consider the finite sequence $\S_N=\{\tau_{N,k}\}_{k=0}^{2^N}$ with $\tau_{N,k}=s_N+\frac{k}{2^{N+1}}$, where $s_N$ is defined by (\ref{eq:sj}). Define $\gamma^*$ as the sequence obtained by concatenating $\mathcal{S}_1,\mathcal{S}_2,\ldots$, so that the first elements of $\gamma^*$ are $s_1,s_1+\frac{1}{4},s_1+\frac{2}{4}, s_2,s_2+\frac{1}{8},s_2+\frac{2}{8}, s_2+\frac{3}{8}, s_2+\frac{4}{8},  \ldots$ The sequence $\gamma^*$ has the property that $\sigma(t) = 0$ for every $t\in\gamma^*$ [recall (\ref{eq:sigmaeqdef})--(\ref{eq:sj})]. According to Remark~\ref{rem:time-invariant}, then the system $\Sigma$ in (\ref{eq:is}) with (\ref{eq:fdef})--(\ref{eq:hdef}) and $\gamma = \gamma^*$ becomes equivalent to a system with time-invariant flow and jump maps.
\begin{teo}
  \label{thm:ncics} 
  The system $\Sigma$ with $f$ and $h$ given by (\ref{eq:fdef})--(\ref{eq:hdef}), and with $\gamma=\gamma^*$, is not CICS.
\end{teo}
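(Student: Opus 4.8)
The plan is to exhibit a single input $u$ and a corresponding solution $x$ that is bounded and forward complete, for which $\|u_t\|_{\infty}\to 0$ as $t\to\infty$ yet $\limsup_{t\to\infty}|x(t)|>0$; this directly contradicts CICS. Since $f$ is input-independent and $h(t,\xi,\mu)=\bar h(|\xi|)+\mu$ at every $t\in\gamma^*$ (because $\sigma\equiv 0$ on $\gamma^*$ and we keep $|\xi|\le 1$), it suffices to take $u$ supported on the jump times and to keep $x>0$, so that $x$ obeys $\dot x=-x$ between jumps and $x(\tau_k)=\bar h(x(\tau_k^-))+u(\tau_k)$ at each jump. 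Starting from $t_0=s_{N_0}$ and $x_0=e^{-1}$ for some large $N_0$, I will design $u$ so that $x$ returns to the fixed level $e^{-1}$ at the end of every block of $\gamma^*$, giving $\limsup|x|\ge e^{-1}>0$.

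Two structural facts drive the construction. First, within the $N$-th block the jumps are equally spaced by $\delta_N=2^{-(N+1)}<1/2$; hence if the post-jump state equals a ceiling $e^{m}$ (with $m\le -1$), the flow decays it to $e^{m}e^{-\delta_N}\in(e^{m-1/2},e^{m})$, which lies in the snap-up region of $\bar h$, so the next jump returns it exactly to $e^{m}$. Thus, with zero input, the state self-sustains at any such ceiling throughout a block. Second, between block $N-1$ and block $N$ there is a gap of length $\approx N$, over which a state sitting at $e^{-1}$ decays to roughly $e^{-(N+1/2)}$. The zero-input trajectory therefore does converge to $0$ (consistent with $0$-GUAS), and the sole role of $u$ is to undo this inter-block decay.

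The climbing mechanism is the heart of the proof. To raise the state from one ceiling $e^{m}$ to the next $e^{m+1}$ I will apply a single small positive kick $u(\tau_k)=\mu$ pushing the post-jump value just above the repelling fixed point of the affine map $p\mapsto (1+e^{1/2})e^{-\delta_N}\,p-e^{m+1/2}$ that governs $\bar h$ on the linear piece $(e^{m},e^{m+1/2}]$. Since this map has slope $(1+e^{1/2})e^{-\delta_N}>1$ and fixed point $p^*\approx e^{m}(1+c\,\delta_N)$ with $c=1+e^{-1/2}$, a kick of size $\mu\approx 2e^{m}\delta_N\le 2\delta_N$ suffices. After the kick I set the input back to zero, and the repelling dynamics carries the state upward through the linear piece until the decayed value enters the snap-up region and the jump lands it exactly on $e^{m+1}$, where it again self-sustains. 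Starting each block from $\approx e^{-(N+1/2)}$, I repeat this for the $\approx N$ ceilings up to $e^{-1}$ and then hold $u=0$ for the rest of the block.

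It remains to verify the budget and the two limits, which is where the real work lies. Each single-ceiling climb consumes $\approx \log_{1+e^{1/2}}(1/\delta_N)=O(N)$ jumps, so the full ascent uses $O(N^{2})$ jumps, comfortably fewer than the $2^{N}+1$ available in block $N$; moreover the state remains in $(0,1]$ (hence bounded) and equals $e^{-1}$ at the end of every block, so $x\not\to 0$. Every kick has magnitude $O(\delta_N)=O(2^{-N})$ and kicks occur only at jump times, so $\sup_{s\ge t}|u(s)|\to 0$, i.e. $\|u_t\|_{\infty}\to 0$. The hard part will be the quantitative bookkeeping across the piecewise definition of $\bar h$ and across block boundaries: proving that the chosen kicks are simultaneously nonnegative, small, and large enough to clear the repelling fixed point, that no climb overshoots $e^{-1}$, and that each ascent completes within its block. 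Once these estimates are in place, $x$ is a bounded forward-complete solution with converging input that does not converge to zero, so $\Sigma$ with $\gamma=\gamma^*$ is not CICS.
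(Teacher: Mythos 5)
Your construction is sound and would prove the theorem, but it takes a genuinely different — and considerably heavier — route than the paper. The paper's proof uses a blunt instrument: it applies the \emph{constant} input value $\mu_N=\frac{1-e^{-\Delta_N}}{1-e^{-1/2}}$ at every one of the $2^N$ jumps of block $N$, uses only the crude inequality $\bar h(r)\ge r$ to get the recursion $\xi_{N,k+1}\ge \xi_{N,k}e^{-\Delta_N}+\mu_N$, and sums the resulting geometric series to conclude $x(s_N+1/2)\ge \mu_N\sum_{j=0}^{2^N-1}e^{-j\Delta_N}=1$; none of the fine structure of $\bar h$ is needed. You instead exploit that fine structure — the snap-up plateaus that make ceilings $e^m$ self-sustaining under $\delta_N<1/2$, and the expansive linear pieces of slope $1+e^{1/2}$ with their repelling fixed points $p^*\approx e^m(1+(1+e^{-1/2})\delta_N)$ — to climb with $O(N)$ sparse kicks of size $O(2^{-N})$ per block. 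Your mechanisms all check out (the fixed-point location, the $O(\log(1/\delta_N))$ jumps per ceiling, the $O(N^2)\ll 2^N$ jump budget for large $N_0$, the convenient landing of the inter-block decayed value $e^{-N-1/2}$ on the boundary of a linear piece), and what you buy is strictly more: your input has summable total energy $\sum_N O(N2^{-N})<\infty$, so the same trajectory would also witness the failure of iISS and BEICS, not just CICS. Indeed, your argument is essentially the machinery the paper reserves for its harder Lemma~\ref{lem:nubebs0} (Algorithm~\ref{alg:main}), where the careful case analysis across the pieces of $\bar h$ occupies over a page; the quantitative bookkeeping you defer at the end is exactly that analysis, so be aware it is routine but genuinely lengthy. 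For the CICS statement alone, the paper's geometric-series argument is the far more economical choice, and note that it also silently relies on boundedness of its solution (required by the CICS definition), a point your construction handles more explicitly by confining the state to $(0,1]$.
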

\begin{proof}
  Consider the input $u^*$ defined as follows:
  \begin{align}
    u^*(t) &=\begin{cases} \mu_N &\text{if } t=\tau_{N,k}, \\
      0 &\text{otherwise},
    \end{cases}\\
    \mu_N &=\frac{1-e^{-\Delta_N}}{1-e^{-1/2}},\quad \Delta_N=2^{-(N+1)}. 
  \end{align}
  Note that $\|u^*_{s_N}\|_{\infty}=\mu_N$ for all $N$, since $|u^*(t)|\le \mu_N$ for all $t> s_N$ and $u^*(\tau_{N,k})=\mu_N$ for all $k=1,\ldots, 2^N$. Therefore $\|u^*_t\|_{\infty}\to 0$ as $t\to \infty$ because $\|u^*_t\|$ is nonincreasing in $t$ and $\mu_N\to 0$.

Let $x$ be the unique solution of $\Sigma$ corresponding to $u^*$ and satisfying the initial condition $x(0)=0$. We claim that $x(s_N+1/2)\ge 1$ for all $N\in \N$.

By solving the equations of $\Sigma$ on $[0,s_1]$ it is clear that $x(s_1)=\mu_1>0$. So, $x(t)>0$ for all $t\ge s_1 = 1$. Let $N\in \N$, then, if $I_N=[s_N,s_N+1/2]$, we have that $\gamma^*\cap I_N=\mathcal{S}_N$ and that $\sigma(t)=0$ for all $t\in I_N$. Let $\xi_{N,k}=x(\tau_{N,k})$ for $k=0,\ldots, 2^N$. Then, for all $0\le k\le 2^{N}-1$, $\xi_{N,k+1}=\bar h(\xi_{N,k}e^{-\Delta_N})+\mu_N$. Since $\bar h(r)\ge r$ for all $r\ge 0$, we have that for all $1\le k \le 2^N-1$
\begin{align*}
\xi_{N,k+1}\ge  \xi_{N,k}e^{-\Delta_N}+\mu_N.
\end{align*}
By induction on $k$ it can be proved that
\begin{align*}
\xi_{N,k}\ge \mu_N \sum_{j=0}^{k-1} e^{-j\Delta_N}.
\end{align*}
Therefore 
\begin{align*}
x(s_N+1/2)=\xi_{N,2^N}\ge \mu_N \sum_{j=0}^{2^N-1} e^{-j\Delta_N}=1.
\end{align*}
Since $s_N\to \infty$, it follows that $x(t)$ does not converge to $0$ as $t\to \infty$, and thus $\Sigma$ is not CICS. \qed
\end{proof}
\begin{teo}\label{thm:nts} The system $\Sigma$ with $f(t,\xi,\mu)=-\xi+\mu$, $h$ given by (\ref{eq:hdef}) and $\gamma=\gamma^*$, is $0$-GUAS but not TS.
\end{teo}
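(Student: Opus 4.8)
The plan is to establish the two assertions separately, reusing almost everything from Lemma~\ref{lem:0guas-ubebs} and Theorem~\ref{thm:ncics}. For $0$-GUAS, the key observation is that the present system differs from the one in Lemma~\ref{lem:0guas-ubebs} only in the flow map, and only through the input: setting $u\equiv\mathbf{0}$ gives $f(t,\xi,0)=-\xi$ and $h(t,\xi,0)=\hat h(t,\xi)$, which are exactly the zero-input flow and jump maps of the system treated in Lemma~\ref{lem:0guas-ubebs}. Hence the zero-input solution sets $\T(t_0,x_0,\mathbf{0})$ coincide for both systems, and since that lemma gives a $\beta\in\KL$ valid for \emph{every} impulse-time sequence (in particular for $\gamma^*$), the $0$-GUAS estimate transfers verbatim. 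This part needs no new computation.

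For the failure of TS, I would exploit the construction of Theorem~\ref{thm:ncics}. First note that the additive structure demanded by the definition of TS holds with $f_0(t,\xi)=-\xi$ and $h_0(t,\xi)=\hat h(t,\xi)$, so it remains only to contradict the stability estimate. I would fix $\varepsilon=1/2$. Given an arbitrary $\delta>0$, pick $N\in\N$ with $\mu_N<\delta$ (possible since $\mu_N\to 0$), and choose initial time $t_0=s_N$, initial state $x_0=0$, and input $u$ equal to the restriction of $u^*$ (from Theorem~\ref{thm:ncics}) to $[s_N,\infty)$. Because $\mu_M$ is decreasing in $M$ and $u^*$ is supported on the null set $\gamma^*$, one checks $\|u_{t_0}\|_{\infty}=\sup_{t\in\gamma^*\cap[s_N,\infty)}|u^*(t)|=\mu_N<\delta$ and $|x_0|=0<\delta$, so both smallness conditions of TS are satisfied.

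The core of the argument is then to show $x(s_N+1/2)\ge 1>\varepsilon$. A key point is that the extra $+\mu$ in the flow map has no effect on this trajectory: since $u^*$ vanishes Lebesgue-almost-everywhere, the absolutely continuous part still obeys $\dot x=-x$, so between jumps $x(\tau_{N,k+1}^-)=x(\tau_{N,k})e^{-\Delta_N}$ exactly as in Theorem~\ref{thm:ncics}. On $I_N=[s_N,s_N+1/2]$ one has $\gamma^*\cap I_N=\mathcal{S}_N$ and $\sigma\equiv 0$, whence $\hat h(t,\cdot)\ge|\cdot|$ there; writing $\xi_{N,k}=x(\tau_{N,k})$ and starting from $\xi_{N,0}=x(s_N)=0$ (the jump at $t_0=s_N$ is not applied), the same recursion $\xi_{N,k+1}\ge\xi_{N,k}e^{-\Delta_N}+\mu_N$ and induction yield $x(s_N+1/2)=\xi_{N,2^N}\ge\mu_N\sum_{j=0}^{2^N-1}e^{-j\Delta_N}=1$. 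As this holds for every $\delta>0$ with a fixed $\varepsilon=1/2$, the TS estimate fails and the system is not TS.

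I expect the main obstacle to be conceptual rather than computational: recognizing that TS can be defeated by time-shifting the \emph{same} destabilizing input of Theorem~\ref{thm:ncics} to a late initial time $t_0=s_N$, which drives its amplitude $\|u_{t_0}\|_{\infty}=\mu_N$ as small as desired while the state is still forced up to $1$. The secondary subtlety worth flagging is verifying that the newly introduced additive input in the flow leaves the trajectory unchanged (because $u^*$ is supported on a set of measure zero), so that the recursion of Theorem~\ref{thm:ncics} applies without modification even when the solution is started afresh from $x_0=0$ at $t_0=s_N$.
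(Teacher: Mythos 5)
Your proposal is correct and follows essentially the same route as the paper: $0$-GUAS is inherited because the zero-input system coincides with that of Lemma~\ref{lem:0guas-ubebs}, and TS fails by launching the solution from $x_0=0$ at the late initial time $t_0=s_N$ with the input $u^*$ of Theorem~\ref{thm:ncics}, so that $\|u^*_{s_N}\|_\infty=\mu_N<\delta$ while $x(s_N+1/2)\ge 1$. Your explicit remarks that the additive $+\mu$ in the flow map is inert because $u^*$ vanishes almost everywhere, and that the recursion starts from $\xi_{N,0}=0$, are details the paper leaves implicit in ``proceeding as in the proof of Theorem~\ref{thm:ncics}''.
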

\begin{proof} It is clear that $\Sigma$ is $0$-GUAS, since its zero-input system is the same as that of the system considered in Lemma \ref{lem:0guas-ubebs}. Consider the input $u^*$ defined in the proof of Theorem \ref{thm:ncics}. Given $\delta>0$, let $N$ be so that $\mu_N<\delta$. Then $\|u^*_{s_N}\|_{\infty}<\delta$. Let $x\in \T(s_N,0,u^*)$. By proceeding as in the proof of Theorem \ref{thm:ncics} it follows that $|x(s_N+1/2)|\ge 1$, showing that $\Sigma$ is not TS.\qed
\end{proof}

\subsection{$0$-GUAS and UBEBS do not imply iISS}

Next, we prove that $\Sigma$ with $f$ and $h$ defined by (\ref{eq:fdef}) and (\ref{eq:hdef}), respectively, and with $\gamma=\gamma^*$ is not iISS.

\begin{teo}\label{thm:niiss}
	Let $\Sigma$ be the system considered in Theorem \ref{thm:ncics}. Then $\Sigma$ is not iISS.
\end{teo}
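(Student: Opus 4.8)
The plan is to refute iISS by contradiction: assume \eqref{eq:ciiss} holds for some $\beta\in\KL$ and $\alpha,\rho_1,\rho_2\in\ki$, and then, for this fixed set of gains, build an input and a solution that violates the bound. The obvious candidate $u^*$ from Theorem~\ref{thm:ncics} does not suffice: although it drives the state to $\ge 1$ over each batch, the associated energy over batch $N$ is of the order $2^N\rho_2(\mu_N)$, and since $\mu_N$ is only of order $2^{-N}$ this need not be small (indeed it diverges for sufficiently flat $\rho_2$). Nor can one simply invoke the failure of CICS, because the $\rho_1,\rho_2$-energy weights impulsive inputs quite differently from the supremum norm used in CICS. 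The idea instead is to reach the state value $1$ from the origin, over a single batch, using only a fixed (gain-dependent) number of nonzero input kicks, each of which can be made arbitrarily cheap.

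The mechanism rests on the staircase structure of $\bar h$ together with the fact that, for $\gamma=\gamma^*$, one has $\sigma(t)=0$ at every impulse time, so that $\hat h(t,\xi)=\bar h(|\xi|)$ whenever $|\xi|\le 1$ and the rounding-up map $\bar h$ is active throughout the relevant range. First I would record two facts about the scalar inter-impulse map $\Phi_N(\xi):=\bar h(\xi e^{-\Delta_N})$ on $(0,1]$, where $\Delta_N=2^{-(N+1)}$ is the spacing within batch $N$: (i) each level $e^{-m}$, $m\in\N_0$, is a fixed point of $\Phi_N$ (since $\bar h$ rounds $e^{-m-\Delta_N}$ back up to $e^{-m}$ as $\Delta_N<1/2$), and sub-threshold perturbations of it are reabsorbed; and (ii) on $(e^{-(m+1)},e^{-(m+1/2)})$ the map $\bar h$ has slope $1+e^{1/2}$, so $\Phi_N$ is expanding there with factor $(1+e^{1/2})e^{-\Delta_N}>1$. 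Consequently, a single input kick at level $e^{-(m+1)}$ of size exceeding an escape threshold of order $e^{-(m+1)}\Delta_N$ dislodges the state into the expanding branch, after which the zero-input dynamics carries it up to the next level $e^{-m}$ within $O(N)$ impulses, where it again rests.

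With this in hand I would fix the solution starting from $t_0=s_N$ with $x_0=0$ and the input supported on the impulse times of batch $N$, so that, since $f$ is input-independent and $\rho_1(0)=0$, the integral part of $\|u_{(t_0,t]}\|_{\rho_1,\rho_2}$ vanishes and the energy reduces to $\sum\rho_2(v_k)$ over the kicks. The construction uses one kick of size $e^{-m_0}$ to jump from $0$ onto the level $e^{-m_0}$, followed by $m_0$ escape kicks of size of order $e^{-(m+1)}\Delta_N$ that climb the staircase $e^{-m_0}\to e^{-(m_0-1)}\to\cdots\to e^{0}=1$, the intervening impulses carrying zero input. The total energy is then bounded by $\rho_2(e^{-m_0})+m_0\,\rho_2(c\,\Delta_N)$ for a constant $c$. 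The crucial ordering of choices is: given the gains, first pick $m_0$ with $\rho_2(e^{-m_0})<\alpha(1)/2$ (possible since $\rho_2(e^{-m_0})\to 0$), and then pick $N$ large enough that $m_0\,\rho_2(c\,\Delta_N)<\alpha(1)/2$ and that the $O(m_0N)$ impulses needed fit within the $2^N$ impulses of batch $N$. For this solution $\beta(|x_0|,t-t_0)=\beta(0,\cdot)=0$, the state reaches $|x|\ge 1$, and the energy is $<\alpha(1)$, so \eqref{eq:ciiss} fails; since $m_0$ and $N$ were chosen after the gains, no choice of iISS gains can work.

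The main obstacle is the rigorous analysis of the escape-and-climb on the nonlinear map $\bar h$: pinning down the escape threshold per level and proving that sub-threshold kicks are reabsorbed into the fixed point (so that the efficient strategy is a single above-threshold kick per level), and, most importantly, bounding the number of impulses consumed by each climb uniformly as $O(N)$ so that a fixed number $m_0$ of climbs provably fits within batch $N$ for all large $N$. A secondary point requiring care is that $\Delta_N\to 0$ makes the expansion factor approach $1+e^{1/2}$ from below, so the constants in the step counts must be controlled explicitly; the floor $\bar h(r)\ge r$ used in Theorem~\ref{thm:ncics} serves only as a crude safeguard and does not by itself deliver the needed energy economy.
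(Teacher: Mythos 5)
Your proposal is correct and follows essentially the same route as the paper: the paper's Lemma~\ref{lem:nubebs0} and its Algorithm~\ref{alg:main} implement exactly your staircase-climbing idea, applying one cheap kick per level of $\bar h$ (at most $n_0=-\lfloor\ln\delta_1\rfloor$ kicks, each of energy at most $\delta_2/n_0$) to dislodge the state from the fixed points $e^{-m}$ into the expanding branch, with the batch index $N$ chosen only after the gains are fixed. The only cosmetic differences are that the paper starts from a small nonzero $x_0$ rather than from $0$, targets $e^{-1}$ rather than $1$, and uses a uniform kick size $\bar\mu$ while shrinking the inter-impulse spacing $\Delta$ relative to $\bar\mu$ (so each step increases the state by at least $(1-e^{-1})\bar\mu$, giving the finite step count $F$ directly), instead of scaling kicks as $\Theta(e^{-m}\Delta_N)$ and counting $O(N)$ expansion steps per level.
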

Theorem \ref{thm:niiss} is a straightforward consequence of the following result.
\begin{lema}\label{lem:nubebs0}
	Let $\Sigma$ be the system considered in Theorem \ref{thm:ncics}.
Let $\rho_1,\rho_2\in\Ki$ and write $\|u\|=\|u\|_{\rho_1,\rho_2}$. Let $\delta_1,\delta_2>0$. Then, there exist $t_0,t,x_0,u$ such that $0\le t_0 \le t$, $|x_0| \le \delta_1$, $\|u\|\le\delta_2$, and the system solution $x$ corresponding to initial time $t_0$, initial condition $x_0$ and input $u$ satisfies
  \begin{align*}
    |x(t)| &\ge e^{-1}.
  \end{align*}
\end{lema}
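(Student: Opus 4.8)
The plan is to work on the interval $I_N := [s_N, s_N + 1/2]$, where $\gamma^*$ places $2^N$ impulse times spaced by $\Delta_N = 2^{-(N+1)}$ and, crucially, $\sigma \equiv 0$, so that the jump map reduces to $\xi \mapsto \bar h(|\xi|) + \mu$ with the full amplifying map $\bar h$ active (threshold $e^{-\sigma}=1$). The aim is to drive the state from a small $x_0$ up to $e^{-1}$ using inputs whose \emph{total} $\|\cdot\|_{\rho_1,\rho_2}$-energy is at most $\delta_2$, for \emph{any} $\rho_2\in\Ki$. Since $f(t,\xi,\mu)=-\xi$ does not depend on $\mu$, the input can act only at impulses, so I would take $u$ supported on finitely many impulse times; then the integral term of $\|u\|_{\rho_1,\rho_2}$ vanishes identically and only the sum term remains.

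First I would record the structure of $\bar h$ on $(0,1]$ read off from~(\ref{eq:barh}): (i) $r \le \bar h(r) \le e^{1/2}r$; (ii) for each $j\in\N$ the level $e^{-j}$ is an attracting fixed point of the zero-input step $\xi \mapsto \bar h(\xi e^{-\Delta_N})$, with basin $(e^{-j-1/2}, e^{-j}]$, since the jump sends the whole basin to $e^{-j}$ and, for $\Delta_N<1/2$, the subsequent decay keeps the state in the basin; and (iii) on $(e^{-j}, e^{-j+1/2})$ the map $\bar h$ is affine with slope $1+e^{1/2}>1$. Fact (iii) is decisive: writing the state as $e^{-j}+\eta$, the excess $\eta$ obeys, under zero input, an affine one-step recursion of slope $1+e^{1/2}$ with a \emph{repelling} fixed point $\eta^*_j = O(e^{-j}\Delta_N)$, so once $\eta$ is pushed above $\eta^*_j$ it grows geometrically until the state crosses $e^{-j+1/2}$ and is then attracted up to the next level $e^{-j+1}$.

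With these facts the construction is a level-by-level climb inside $I_N$. I would fix $L\in\N$ with $e^{-L}\le\delta_1$ and set $t_0 = s_N$, $x_0 = e^{-L}$. For each level $j=L,L-1,\ldots,2$ I apply a \emph{single} input nudge $\mu_j$, of size a fixed multiple of $\eta^*_j = O(e^{-j}\Delta_N) = O(2^{-N})$, at one impulse time; this raises the excess above $\eta^*_j$, after which the input is zero and fact (iii) makes the state climb from $e^{-j}$ to $e^{-j+1}$ in finitely many steps. After the $L-1$ transitions the state reaches $e^{-1}$, and I take $t$ to be the impulse time at which this first occurs, so $|x(t)| = e^{-1}$. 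The climb uses $O(LN)$ impulses and elapsed time $O(LN)\Delta_N\to 0$, hence fits inside $I_N$ for large $N$. Since $u$ is nonzero at only $L-1$ impulse times, $\|u\|_{\rho_1,\rho_2} = \sum_{j}\rho_2(\mu_j) \le (L-1)\,\rho_2(C\,2^{-N})$ for some constant $C$; as $L$ depends only on $\delta_1$ and $\rho_2(C\,2^{-N})\to 0$, choosing $N$ large yields $\|u\|_{\rho_1,\rho_2}\le\delta_2$.

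The main obstacle is the quantitative amplification estimate behind fact (iii): I must show that an affordable nudge of size $O(\Delta_N)$ reliably triggers the geometric growth despite the competing flow decay — that is, that the threshold $\eta^*_j$ and the affordable nudge scale the \emph{same} way in $N$ — and that the growth completes within the available $O(N)$ steps, uniformly as $N\to\infty$. This is exactly where the genuine amplification (slope $1+e^{1/2}>1$), rather than the mere inequality $\bar h(r)\ge r$ used in Theorem~\ref{thm:ncics}, is indispensable: with $\bar h(r)\ge r$ alone one would need $\sum_j\mu_j$ of order $e^{-1}$, forcing a fixed positive energy for affine $\rho_2$, whereas the slope-$(1+e^{1/2})$ region lets a single $O(2^{-N})$ nudge per level suffice and drives the total energy to zero for every $\rho_2\in\Ki$.
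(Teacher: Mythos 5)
Your proposal is correct and follows essentially the same route as the paper's proof: the paper's Algorithm~\ref{alg:main} implements exactly your level-by-level climb on $I_N=[s_N,s_N+1/2]$ — one input nudge per level $e^{-\ell}$ (hence at most $n_0=-\lfloor\ln\delta_1\rfloor$ nonzero input values, a count depending only on $\delta_1$), and zero input while the affine slope-$(1+e^{0.5})$ piece of $\bar h$ amplifies the excess up to the next plateau. The only difference is quantitative bookkeeping: the paper fixes the nudge size $\bar\mu=\min\{\rho_2^{-1}(\delta_2/n_0),\,e^{-n_0+1}-e^{-n_0}\}$ independently of $N$, then picks $\Delta$ (and hence $N$ with $\Delta_N<\Delta$, $2^N>F$) afterwards, and bounds the number of steps by the uniform increment $\xi_k-\xi_{k-1}\ge(1-e^{-1})\bar\mu$, whereas you tie the nudge to $\eta_j^*=O(\Delta_N)$ and count $O(N)$ steps per level via the repelling-fixed-point geometric growth — both tunings close the argument.
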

\begin{proof}
  If $\delta_1 \ge e^{-1}$, then the result follows trivially with $t = t_0$. So, consider $\delta_1 < e^{-1}$. Define 
  \begin{align*}
    n_0 &= -\lfloor\ln\delta_1\rfloor,\\ 
    \bar\mu &= \min\left\{ \rho_2^{-1}(\delta_2/n_0), e^{-n_0+1}-e^{-n_0} \right\},\\
    \Delta &= \min\left\{ \ln\dfrac{1}{1-\bar\mu}, \ln\dfrac{1+\sqrt{e}\bar\mu}{1+\bar\mu} \right\}.
  \end{align*}
  Note that $0 < \Delta < 1/2$.
  Consider the input construction algorithm given as Algorithm~\ref{alg:main}.
\begin{algorithm}[!h]
  \label{alg:main}
  \SetTitleSty{texrm}{\normalsize}
  \SetKwComment{tcc}{\% }{}
  \KwData{$\delta_1,\Delta,\bar\mu$}
  \KwOut{$F$, $\{\xi_k\}_{k=1}^F$, $\{\mu_k\}_{k=1}^F$, $\{k_i\}_{i=1}^{n_1}$}
  \SetKwBlock{Init}{begin}{end}
  \Init(Initialization){
    $\xi_0 = e^{\lfloor\ln\delta_1\rfloor}$,
    $k \leftarrow 0$, $i\leftarrow 0$;\hfill(I)\\
  }
  \Repeat{$\xi_k \ge e^{-1}$}
         {$\ell_k = -\lceil\ln\xi_k - \Delta \rceil$;\hfill(R1)\\
           $k \leftarrow k+1$;\hfill(R2)\\
         \eIf{$-\ell_{k-1} - 0.5 \le \ln \xi_{k-1} - \Delta \le -\ell_{k-1}$} 
           {$\mu_k = \bar\mu$;\hfill(Ri1)\\
             $i\leftarrow i+1$;\hfill(Ri2)\\
           $k_i = k$;\hfill(Ri3)\\}
           {$\mu_k = 0$;\hfill(Re)\\}
           $\xi_{k} =\bar h(\xi_{k-1} e^{-\Delta})+\mu_k$;\hfill(R3)\\
         }
  \caption{Input sequence construction}
\end{algorithm}
The rationale for this algorithm is as follows. It will later be shown that this algorithm generates a sequence of state values (namely $\{\xi_k\}$) that will constitute a lower bound for the state evolution at the impulse times. This input construction algorithm generates a zero input (namely $\mu_k = 0$) whenever the unforced system dynamics pushes the state farther from the origin. A small input ($\mu_k = \bar\mu$) is generated only when necessary to make the subsequent unforced dynamics keep steering the state farther.

The algorithm begins by setting an initial condition for the state lower bound sequence $\xi_0 = e^{\lfloor \ln\delta_1 \rfloor} \le \delta_1$, at the initialization step (I). Then, in the repeat block at (R1), it happens that $\ell_0 = -\lceil\ln\xi_0 -  \Delta \rceil = -\lceil\lfloor\ln\delta_1\rfloor -  \Delta \rceil = -\lfloor\ln\delta_1\rfloor = n_0$. At (R2), $k$ is set to $k=1$. Then, the if condition initially holds, because  $-\ell_0 = \lceil\ln\xi_0 -  \Delta \rceil \ge \ln \xi_0 - \Delta = \lfloor\ln\delta_1\rfloor - \Delta = -\ell_0 - \Delta > -\ell_0 - 0.5$. Consequently, at the first iteration, corresponding to $k=1$, (Ri1) to (Ri3) will be executed so that $\mu_1 = \bar\mu$, $i$ is set to $i=1$ and $k_1 = k = 1$. At (R3), we have $\xi_1 = \bar h(\xi_0 e^{-\Delta}) + \mu_1 = \bar h(e^{\lfloor \ln \delta_1 \rfloor - \Delta}) + \bar\mu = e^{\lfloor \ln\delta_1 \rfloor} + \bar\mu$, where we have used (\ref{eq:barh}). Recalling the definition of $\bar\mu$, it follows that $\xi_0 < \xi_1 \le e^{-n_0 + 1}$. 

We claim that this algorithm finishes in a finite number of steps $F$ that depends on $\delta_1$ and $\delta_2$, and that $n_1$, the number of iterations at which $\mu_k\neq 0$, satisfies $n_1 \le n_0$ so that $\sum_{k=1}^F \rho_2(\mu_k)\le n_1 \rho_2(\bar\mu) \le \delta_2$. Whenever $-\ell_{k-1} -0.5 \le \ln \xi_{k-1} - \Delta \le -\ell_{k-1}$ (this holds for $k=1$), then according to (Ri1) and (R3) in Algorithm~\ref{alg:main}, and (\ref{eq:barh}), then 
\begin{align}
  \xi_{k} &= e^{-\ell_{k-1}} + \bar\mu \ge \xi_{k-1} e^{-\Delta} + \bar\mu \ge \xi_{k-1} (1-\bar\mu) + \bar\mu\notag\\
  \label{eq:xikdif1}
  &= \xi_{k-1} + (1-\xi_{k-1}) \bar\mu \ge \xi_{k-1} + (1-e^{-1})\bar\mu > \xi_{k-1}
\end{align}
provided $\xi_{k-1} \le e^{-1}$ (otherwise, the algorithm would have stopped). Hence, $\ell_k \le \ell_{k-1}$. Also in this case, we have 
\begin{align}
  \label{eq:xikD1}
  \xi_{k}e^{-\Delta} &= (e^{-\ell_{k-1}} + \bar\mu) e^{-\Delta} \ge (e^{-\ell_{k-1}} + \bar\mu) \frac{1+\bar\mu}{1+\sqrt{e}\bar\mu}. 
\end{align}
The function $\phi(r) = \frac{1+r}{1+\sqrt{e}r}$ is strictly decreasing in $\R_{\ge 0}$ and therefore $\phi(\bar\mu) > \phi(a\bar\mu)$ for every $a > 1$. Take $a = e^{\ell_{k-1}-0.5}$, which satisfies $a > 1$ because $\ell_{k-1} \ge 1$ whenever $\xi_{k-1} \le e^{-1}$ (otherwise the algorithm would have stopped), and operate on (\ref{eq:xikD1}) to obtain
\begin{align}
  \label{eq:xikeD}
  \xi_{k}e^{-\Delta} &> (e^{-\ell_{k-1}} + \bar\mu) \frac{1+e^{\ell_{k-1}-0.5} \bar\mu}{1+ e^{\ell_{k-1}}\bar\mu} = e^{-\ell_{k-1}} + e^{-0.5}\bar\mu.
\end{align}
It follows that
\begin{align}
  \label{eq:mellk}
  -\ell_k = \lceil \ln\xi_k - \Delta \rceil = \lceil \ln(\xi_k e^{-\Delta}) \rceil > -\ell_{k-1}.
\end{align}
In addition, by definition of $\bar\mu$ and provided $\ell_{k-1} \le n_0$ then
\begin{align*}
  \xi_{k}e^{-\Delta} &< \xi_k = e^{-\ell_{k-1}} + \bar\mu \le e^{-\ell_{k-1}} + e^{-n_0+1} - e^{-n_0}\\
  &\le e^{-\ell_{k-1}} + e^{-\ell_{k-1}+1} - e^{-\ell_{k-1}} = e^{-\ell_{k-1}+1}.
\end{align*}
Application of $\ln$ to the latter inequality yields
\begin{align*}
  \ln \xi_k - \Delta \le -\ell_{k-1} + 1,
\end{align*}
and since the right-hand side is integer valued, then also
\begin{align}
  \label{eq:mellk2}
  -\ell_k = \lceil \ln \xi_k - \Delta \rceil \le -\ell_{k-1} + 1.
\end{align}
From~(\ref{eq:mellk}) and (\ref{eq:mellk2}), we reach
\begin{align}
  \label{eq:ellk}
  \ell_k = \ell_{k-1} - 1.
\end{align}
We have thus shown that if (Ri1) to (Ri3) are executed in Algorithm~\ref{alg:main}, so that $\mu_k =\bar\mu$, then the value $\xi_k$ subsequently set at (R3) must satisfy
\begin{align*}
  \lceil \ln\xi_k - \Delta \rceil = 1 - \ell_{k-1}
\end{align*}
and (\ref{eq:ellk}) will hold at (R1). 
As a consequence, the first iteration number $k$ at which it happens that $\ln \xi_{k-1} - \Delta < -\ell_{k-1}-0.5$ ($k\ge 2$ because this does not happen at $k=1$), then it must be true that $\xi_{k-1} e^{-\Delta} > e^{-\ell_{k-2}} + e^{-0.5}\bar\mu = e^{-\ell_{k-1}-1} + e^{-0.5}\bar\mu$, as follows from (\ref{eq:xikeD}) and (\ref{eq:ellk}). In this case the if condition in Algorithm~\ref{alg:main} is not satisfied, (Re) is executed, and at (R3) it will happen that
\begin{align}
  \xi_{k} &= (1+\sqrt{e}) \xi_{k-1} e^{-\Delta} - e^{\lceil \ln\xi_{k-1} - \Delta \rceil - 0.5}\notag\\
  &=  (1+e^{0.5}) \xi_{k-1} e^{-\Delta} - e^{0.5}e^{-\ell_{k-1}-1}\notag\\
  &\ge (1+e^{0.5}) \xi_{k-1} e^{-\Delta} + e^{0.5}(- \xi_{k-1} e^{-\Delta} + e^{-0.5}\bar\mu)\notag\\
  \label{eq:xikdif2}
  &= \xi_{k-1} e^{-\Delta} + \bar\mu \ge \xi_{k-1} + (1-e^{-1})\bar\mu > \xi_{k-1}. 
\end{align}
Although in this case $\xi_k > \xi_{k-1}$, from (Re), (R3) and the definition of $\bar h$, then $\xi_k = \bar h(\xi_{k-1}e^{-\Delta}) \le e^{-\ell_{k-1}}$ and hence still $\ell_k = \ell_{k-1}$. As a consequence, $\xi_k e^{-\Delta} > \xi_{k-1}e^{-\Delta} > e^{-\ell_{k-1}-1} + e^{0.5} \bar\mu = e^{-\ell_{k}-1} + e^{0.5} \bar\mu$. 
Therefore, the inequality $\xi_{k-1} e^{-\Delta} > e^{-\ell_{k-1}-1} + e^{-0.5}\bar\mu$ holds whenever $\ln \xi_{k-1} - \Delta < -\ell_{k-1}-0.5$ and the above derivations show that $\xi_k > \xi_{k-1}$ whenever $\ln \xi_{k-1} - \Delta < -\ell_{k-1}-0.5$.
The sequence $\{\xi_k\}$ generated by Algorithm~\ref{alg:main} is thus strictly increasing and $\xi_{k+1} - \xi_k \ge (1-e^{-1})\bar\mu > 0$, as follows from (\ref{eq:xikdif1}) and~(\ref{eq:xikdif2}). We can thus bound the maximum number of iterations required as $$F \le \left\lceil \dfrac{e^{-1} - e^{-n_0}}{(1-e^{-1})\bar\mu} \right\rceil.$$

Since the sequence $\{\xi_k\}_{k=0}^F$ is strictly increasing, then the integer sequence $\{\ell_k\}_{k=0}^F$ is nonincreasing. Consider the sequence $\{k_i\}_{i=1}^{n_1}$. We have that $\mu_k\neq 0$ if and only if $k=k_i$ for some $i$, and, from the first part of the proof, that $k_1=1$, $\ell_{k_1}=n_0-1$ and $\ell_{k_i}=\ell_{k_i-1}-1$ for all $i=1,\ldots, n_1$. Since $\xi_{k_{n_1}-1}<e^{-1}$, we have that 
\begin{align*}
-\ell_{k_{n_1}-1}-0.5+\Delta \le \ln \xi_{k_{n_1}-1}<-1, 
\end{align*} 
and then $\ell_{k_{n_1}-1}\ge 1$. In consequence  $\ell_{k_{n_1}}=\ell_{k_{n_1}-1}-1 \ge 0$. Since $n_0-1\ge \ell_{k_1}-\ell_{k_{n_1}}\ge n_1-1$, it follows that $n_1\le n_0$.
	

Next, consider the quantities produced by Algorithm~\ref{alg:main}. Let $N \in \N$ be such that $\Delta_N=2^{-(N+1)}<\Delta$ and $2^{N}>F$. Define the input $u$ via $u(\tau_{N,k})=\mu_k$ for $k=1,\ldots,F$ and $u(t)=0$ otherwise. Note that $\|u\| \le \delta_2$. Consider the solution $x$ corresponding to initial time $s_N$, initial condition $\delta_1$ and input $u$. We have $x(\tau_{N,1}^-) = \delta_1 e^{-\Delta_N} \ge \delta_1 e^{-\Delta}$. 
Then 
      \begin{align*}
      x(\tau_{N,1}) &=h(\tau_{N,1},\delta_1 e^{-\Delta_N},u(\tau_{N,1}))\\
      &\ge h(\tau_{N,1},\delta_1 e^{-\Delta},u(\tau_{N,1})) = \bar h(\delta_1 e^{-\Delta})+\mu_1
      \end{align*}
        where the last equality follows from the fact that $\sigma(\tau_{N,k})=0$ for all $\tau_{N,k}$ since $\tau_{N,k}\in [s_N,s_N+1/2]$. 
      
      Then $x(\tau_{N,1}) \ge  \bar h(\xi_0e^{-\Delta})+\mu_1 = \xi_1$.
      By induction, we can prove that $x(\tau_{N,i}) \ge \xi_i$ for $i=1,2,\ldots,F$.
      Suppose that $x(\tau_{N,i}) \ge \xi_i$ for some $i$. This already holds for $i=1$. Then, $x(\tau_{N,i+1}^-) = x(\tau_{N,i}) e^{-\Delta_N} \ge x(\tau_{N,i}) e^{-\Delta}$ and 
      \begin{align*}
        x(\tau_{N,i+1}) &= h(\tau_{N,i+1},x(\tau_{N,i+1}^-),u(\tau_{N,i+1}))\\
        &\ge h(\tau_{N,i+1},\xi_ie^{-\Delta},u(\tau_{N,i+1}))\\       
        &= \bar h(\xi_i e^{-\Delta})+\mu_{i+1} = \xi_{i+1},
      \end{align*}
      where we have used the properties and definition of $h$ and the facts that $\xi_i\le 1$ and $\sigma(\tau_{N,i+1})=0$.
      
      As a consequence, it will happen that $x(\tau_{N,F}) \ge \xi_F \ge e^{-1}$, and the result is established with $\xi=\delta_1$, $t_0 = s_N$ and $t=\tau_{N,F}$.
  \qed
\end{proof}
{\bf Proof of Theorem \ref{thm:niiss}.} Suppose that $\Sigma$ is iISS. Then there exist $\beta\in \KL$ and $\alpha$, $\rho_1$ and $\rho_2\in \Ki$ such that (\ref{eq:ciiss}) holds. Pick any $\delta>0$ so that $\beta(\delta,0)+\delta<\alpha(e^{-1})$. Then, for all $t_0\ge 0$, $|x_0|\le \delta$, $u\in \U$ such that $\|u\|_{\rho_1,\rho_2}\le \delta$ and $t\ge t_0$, if $x\in \T(t_0,x_0,u)$ then for all $t\ge t_0$
\begin{align*}
\alpha(|x(t)|)&\le \beta(|x(t_0)|,t-t_0)+\|u\|_{\rho_1,\rho_2}\\
&\le \beta(\delta,0)+\delta <\alpha(e^{-1}).
\end{align*}
Therefore $|x(t)|< e^{-1}$ for all $t\ge t_0$. Since the latter contradicts Lemma \ref{lem:nubebs0}, it follows that $\Sigma$ is not iISS. \qed

We emphasize that since $\gamma = \gamma^*$ is the impulse-time sequence considered in Theorems~\ref{thm:ncics},~\ref{thm:nts} and~\ref{thm:niiss}, and since $\sigma(t) = 0$ for every $t\in\gamma^*$, then we have actually shown that the given negative results hold for an impulsive system with time-invariant flow and jump maps (recall Remark~\ref{rem:time-invariant}). 

\section{Discussion}
\label{sec:discussion}

The results obtained in Section \ref{sec:ce} show that the $0$-GUAS property as usually defined for impulsive systems is too weak for the system with inputs/disturbances to inherit any kind of meaningful stability with respect to small inputs. In fact, Theorems~\ref{thm:ncics} and~\ref{thm:nts} show that the state of an impulsive system may be not small even when the initial condition and the inputs are small in magnitude. Lemma~\ref{lem:nubebs0} shows that irrespective of the way in which the energy of an input is defined, the magnitude of a solution corresponding to an arbitrarily small initial condition and to an input with arbitrarily small energy may be not necessarily small. 

One main reason for this lack of robustness is the fact that even if Zeno behavior is ruled out from admissible impulse-time sequences, i.e. impulse-time sequences cannot have finite limit points, impulses can occur arbitrarily frequently as time progresses. This is the case for the sequence $\gamma^*$ defined in Section~\ref{sec:0-guas-cics}. When the initial time is not fixed, such as in the currently considered time-varying case, an appropriately large initial time can make impulses as frequent as desired even if the elapsed time $t-t_0$ is small. Although in practice it would be reasonable to assume that impulses cannot occur infinitely often, in some settings one cannot upper bound the number of impulses a priori in relation to elapsed time. Therefore, the $0$-GUAS property as usually defined for impulsive systems is, mathematically, not very useful in the analysis/design of real world systems where the existence of modeling errors or disturbances is unavoidable, unless the number of impulses could be bounded in relation to elapsed time. Note that this type of bound on the number of impulses exists in the case of fixed dwell-time or average dwell-time sequences and, most generally, UIB sequences as per \citet{haiman_aadeca18}.


These facts show the need for a stronger stability concept for impulsive systems. One way of strengthening stability is to mimic that considered in the theory of hybrid systems \citep{goesan_book12} by taking into account, in the decaying term appearing in (\ref{eq:0-guas}), the number of impulse-time instants contained in the interval $(t_0,t]$, namely $N(t_0,t)$. This is achieved, for example, replacing the term $\beta(|x(t_0)|,t-t_0)$ by $\beta(|x(t_0)|,t-t_0+N(t_0,t))$ \citep[see][]{manhai_tac19arxiv}. It can be easily shown that 0-GUAS in the usual sense implies this stronger 0-GUAS whenever the number of impulses that occur can be bounded in relation to elapsed time, a property that we referred to as the uniform incremental boundedness (UIB) of the impulse-time sequences \citep[see][]{haiman_aadeca18,manhai_tac19arxiv}.

In a forthcoming paper we will show that with this stronger definition of stability, $0$-GUAS implies the CICS, the TS and the BEICS properties \citep{jayrya_tac10,jayawa_siamjco10}, where the latter is defined as follows: The system is bounded-energy-input converging-state (BEICS) if there exist $\rho_1,\rho_2\in \Ki$ such that for all $x\in \T(t_0,x_0,u)$, with $t_0\ge 0$, $x_0\in \R^n$, and $u \in \U$ such that $\|u_{t_0}\|_{\rho_1,\rho_2}<\infty$, $x$ is forward complete and $x(t)\to 0$ as $t\to \infty$.


\bibliography{/home/hhaimo/latex/strings.bib,/home/hhaimo/latex/complete_v2.bib,/home/hhaimo/latex/Publications/hernan_v2.bib}

%
\end{document}